\newtheorem{theorem}{Theorem}
\newtheorem{remark}{Remark}
\newtheorem{lemma}{Lemma}
\begin{document}

\title{Localization phenomena in Nonlinear Schr\"odinger equations with spatially inhomogeneous nonlinearities: \\ Theory and applications to Bose-Einstein condensates}

 \author{V\'ictor M. P\'erez-Garc\'ia} 

\affiliation{Departamento de Matem\'aticas, E. T. S. I. Industriales and Instituto de Matem\'atica Aplicada a la Ciencia y la Ingenier\'{\i}a, 
Edificio Polit\'ecnico, Avenida de Camilo Jos\'e Cela, 3, Universidad de Castilla-La Mancha 13071 Ciudad Real, Spain} 

\author{Rosa Pardo}

\affiliation{Departamento de Matem\'atica Aplicada, 
Facultad de Ciencias Qu\'{\i}micas, 
Universidad Complutense,  Avda. Complutense s/n, 
28040 Madrid, Spain} 

\begin{abstract}
We study the properties of the ground state of Nonlinear Schr\"odinger Equations with spatially inhomogeneous
interactions and show that it experiences a strong localization on the spatial region where the interactions vanish. At the same time, tunneling to regions with positive values of the interactions is strongly supressed by the nonlinear interactions and as the number of particles is increased it saturates in the region of finite interaction values. The chemical potential has a cutoff value in these systems and thus takes values on a finite interval. 
The applicability of the phenomenon to Bose-Einstein condensates is discussed in detail.
\end{abstract}

\pacs{03.75.Lm, 05.45.Yv, 42.65.Tg}

\maketitle

\section{Introduction}

The experimental generation of Bose-Einstein condensates (BEC) with ultracold dilute atomic
vapors \cite{Pitaevskii} has turned out to be of exceptional importance for physics. 
The formation of a condensate occurs when the temperature is low enough and most of the atoms occupy the ground state of the system. This process is visible both in momentum and in real space due to the spatial inhomogeneities exhibited by the order parameter on a macroscopic scale because of the trapping potentials.

The properties of the ground state of trapped BECs are well known. In the mean field limit
simple analytical expressions are available  in the  Thomas-Fermi approximation \cite{Thomas-Fermi} and beyond \cite{Dalfovo1,Dalfovo2}. These approximations and direct numerical simulations describe accurately the properties of the experimentally found ground states \cite{Numerics1,Numerics2}.
 
Nonlinear interactions between atoms in a Bose-Einstein condensate are dominated by the two-body collisions that can be controlled by the so-called Feschbach resonance (FR) management \cite{FB1}. The control in time of the scattering length has been used to generate bright solitons \cite{b1,b2,b3} and induce collapse \cite{collapse} and has been the basis for theoretical proposals to obtain different types of nonlinear waves \cite{Kono1,Kono2,AB,ST,Gaspar,Ft3,KP}. Moreover, interactions can be made spatially dependent by acting on either the spatial dependence of the magnetic field or the laser intensity (in the case of optical control of FRs \cite{FB2}) which act on the Feschbach resonances. This possibility has motivated many theoretical studies on the behavior of solitons in Bose-Einstein condensates (BECs) with spatially inhomogeneous interactions \cite{V1,V2,V3,V4,V5,V6,V7,V8,V9,V10,V11,V12,V13,V14,V15}. 
 
  In this paper we describe  a novel nonlinear phenomenon related to the ground state of a BEC in the mean field limit  when the interactions are spatially inhomogeneous. We will show that when the scattering length is non-negative and vanishes on a certain spatial region, a striking localization phenomenon of the atom density occurs at the regions where the interactions vanish. By tuning appropriately the control (magnetic or optical) fields this phenomenon can be used to design regions with large particle densities and prescribed geometries. Another interesting phenomenon to be studied in this letter is the nonlinear limitation of tunneling of atoms to the regions in which the interactions are stronger and the fact that the chemical potential exhibits a cutoff value.
  
  Although it is reasonable to think that a BEC will avoid regions of large repulsive interactions and
prefer to remain in regions with smaller interactions, the localization phenomenon to be described in this paper here goes beyond what one would naively expect.
    
\section{The problem and its mathematical modelling}

\subsection{Physical system to be studied}

In this paper we will consider physical systems ruled by the nonlinear Schr\"odinger equation (NLS)
\begin{equation}
i\frac{\partial \psi}{\partial t}=-\frac{1}{2}\Delta \psi+ V(x)\psi + g(x)|\psi|^{p}\psi, \label{gp_dimensionless}
\end{equation}
in $\mathbb{R}^N$, where $p>1$ is a real parameter and $g,\ V\geq 0$ are non-negative real functions.  $V$ describes an external localized
potential acting on the system satisfying,
\begin{equation}
\label{hip:V}
V(x) \to \infty, \qquad  \mbox{as } |x| \to \infty,
\end{equation}
 and $g$ is a spatially dependent nonlinear coefficient. 
Stationary solutions of Eq. (\ref{gp_dimensionless}) are defined through 
\begin{equation}
\psi(x,t) = \phi(x) \exp\left(i \lambda t\right)
\end{equation}
which leads to 
 \begin{equation}\label{gs}
 \lambda \phi = -\frac{1}{2} \Delta \phi + V(x) \phi + g(x) |\phi|^{p} \phi.
 \end{equation}
 Of all the posible solutions of Eq. (\ref{gs}) we will be interested on the so-called ground state, which is the 
 the real, stationary positive solution of the Gross-Pitaevskii equation (\ref{gs})
 which minimizes the energy functional 
 \begin{equation}
 E(\phi) =  \int_{\mathbb{R}^3} \left[  \frac{1}{2} \left|\nabla \phi\right|^2 + V(x) \left|\phi\right|^2 + \frac{1}{2} g(x) |\phi|^{p+2}\right],
 \end{equation}
under the constraint of a fixed $ \int_{\mathbb{R}^3} |\phi|^2 d x$.

For our purpouses the only relevant property is the positivity, thus we will consider positive solutions of 
\begin{equation}
\label{eq:rd}
-\frac{1}{2}\Delta  u + V(x) u =\lambda  u - g(x)u^p ,\qquad x\in \mathbb{R}^N
\end{equation}
where $V$ satisfies \eqref{hip:V}. 

From the physical point of view Eq. (\ref{eq:rd}) is a model of different physical systems. However, as stated in the introduction, in this paper we will focus mainly on its applicability to the description of the ground state of a trapped Bose-Einstein condensate with spatially inhomogeneous interactions in the mean field limit
where physically it is both posible to control the interactions and achieve large values for the coefficient $g(x)$.  Although Eq. (\ref{gs}) is written in nondimensional units it is easy to connect this problem with the realistic quantum problem (see e.g. \cite{Pitaevskii}), considering that the coordinates $x$ and time $t$ are measured in units of  $a_0=\sqrt{\hbar/m w}$ and $1/w$, respectively, while the energies and frequencies are measured in  units of $\hbar w$ and $w$ respectively,  $w$ being  a characteristic frequency of the potential. Finally, $g(x) =4\pi a_s(x)/a_0$ is proportional to the local value of the s-wave scattering length $a_s(x)$ and the parameter $p=2$. Another relevant physical quantity, the number of particles, is directly related to the $L^2$ norm of the solution through
\begin{equation}
N = \|u\|_2^2 = \int_{\mathbb{R}^N} u^2 dx.
\end{equation}

In this paper we will study properties of the positive solution of Eq. (\ref{eq:rd}) (ground state) when the interactions vanish on a certain set of points which we will denote as $\omega$, i.e.
\begin{equation} \label{set:supp:g}
\omega  :=\{ x\in \mathbb{R}^N : g(x)=0\},
\end{equation}
Through this paper we will assume that $\omega$ 
is composed by a finite number of closed connected components $\omega _j ,$ $1 \leq j \leq J,$ which are disjoint $\omega _i \cap
\omega _j =\emptyset$ if $ i\neq  j$ and it is assumed that each component $\omega _j $ is regular enough and that
$\mathbb{R}^N\setminus \omega $ is connected for $N\geq 1.$

\subsection{The Thomas-Fermi limit}

The Thomas-Fermi approximation is used  extensively in the field of Bose-Einstein condensation to describe the solutions in the situation in which the nonlinear term is ``large" and proceeds by neglecting the kinetic energy or equivalently the term proportional to $\Delta \phi$ in Eq. (\ref{gs}) which leads to 
\begin{equation}
\phi_{TF}(x) \simeq \sqrt{\frac{\lambda-V(x)}{g(x)}}.
\end{equation}
Thus, the TF approximation provides the profile of the stationary positive solution (i.e. the ground state). In  the case of spatially homogeneous interactions $g(x) = g_0$ the so-called classical turning points for which $\lambda = V(x)$ delimit the regions in which the approximation breaks, since near to those points the amplitude is small and thus the kinetic energy cannot be neglected anymore. 

It is remarkable that, when the interactions are spatially inhomogeneous a new feature appears, which is that the amplitude of the Thomas-Fermi solution diverges on the set $\omega$. 
Although it is obvious, that in the vicinity of  $\omega$ the Thomas-Fermi approximation is not correct, this divergence is the first indication of an interesting nonlinear  phenomenon with relevant physical implications whose study is the purpouse of this paper: \emph{the tendency of positive solutions of Eq. (\ref{eq:rd}), or in physical terms the ground state of Bose-Einstein condensates with spatially inhomogeneous interactions, to localize strongly on the regions where the scattering length is close to zero provided the system is sufficiently nonlinear}.

 \section{Localization phenomena: A ``toy" example}
 \label{toy}
 
 Let us first consider the exactly solvable  one dimensional ``toy" model on the interval $ x \in [-L,L],$ given by
 \begin{subequations}
 \begin{eqnarray}
 & & \lambda u = -u'' + g(x) u^3, \\
 & &u(L) = u(-L) = 0,
 \end{eqnarray}
 \end{subequations}
which can be understood as a version of Eq. (\ref{eq:rd}) with a potential of the form
\begin{equation}
V(x)= \begin{cases} 0 &  |x|<L, \\ \infty & |x|>L.
\end{cases}
\end{equation}
 We will take the interactions as given by the equation
 \begin{equation}
 g(x) = \begin{cases} g_0 & |x| < a,\\  0 & |x| >a.\end{cases}
 \end{equation}
  This is a model for an ideal quasi-one dimensional BEC in a box.
   
  In this simple case, 
the positive, stationary solution, can be obtained analitically and is given by  
\begin{subequations}
\begin{equation}
\label{pepa}
u(x)   =   C \sin \left[\sqrt{2\lambda} \left(x-L\right)\right], a <|x| < L,
 \end{equation}
while for $|x|<a$ the solutions are given by
 \begin{equation}\label{pipa}
u(x)    = \begin{cases} \sqrt{\frac{\alpha \lambda}{g_0}} \ \text{sn}\left(x\sqrt{\lambda \alpha}+\delta; k^2\right),   & \lambda < \lambda_*\\
 \pi/\left(2\sqrt{2g_0}|a-L|\right), & \lambda = \lambda_*, \\
 \sqrt{\frac{\alpha \lambda}{g_0}} \ \text{dc}\left(x\sqrt{\lambda \alpha}; k^2\right), &  \lambda > \lambda_*
  \end{cases}
   \end{equation}
   \end{subequations}
   where $\alpha(k) = 2/(1+k^2)$, sn and dc are two of the standard Jacobi elliptic functions and $k$ is the elliptic modulus. Both the elliptic modulus and amplitude $C$ can be obtained from the matching conditions for $\phi(a)$ and $\phi'(a)$. These conditions also give $\lambda_* = \pi^2/\left[8(a-L)^2\right]$. 
   
   It is interesting to point out that there is a cutoff value for the eigenvalue,  
    $\lambda_c$ beyond which stationary solutions do not exist. Its explicit value can be obtained from the condition of maximum slope at $x=a$ which leads to $\lambda_c = \pi^2/[2(a-L)^2]$. In that limit the $L^2$-norm (number of particles in the condensate) diverges, since the amplitude in the outer region $C \rightarrow \infty$. 
    \begin{figure}
 \epsfig{file=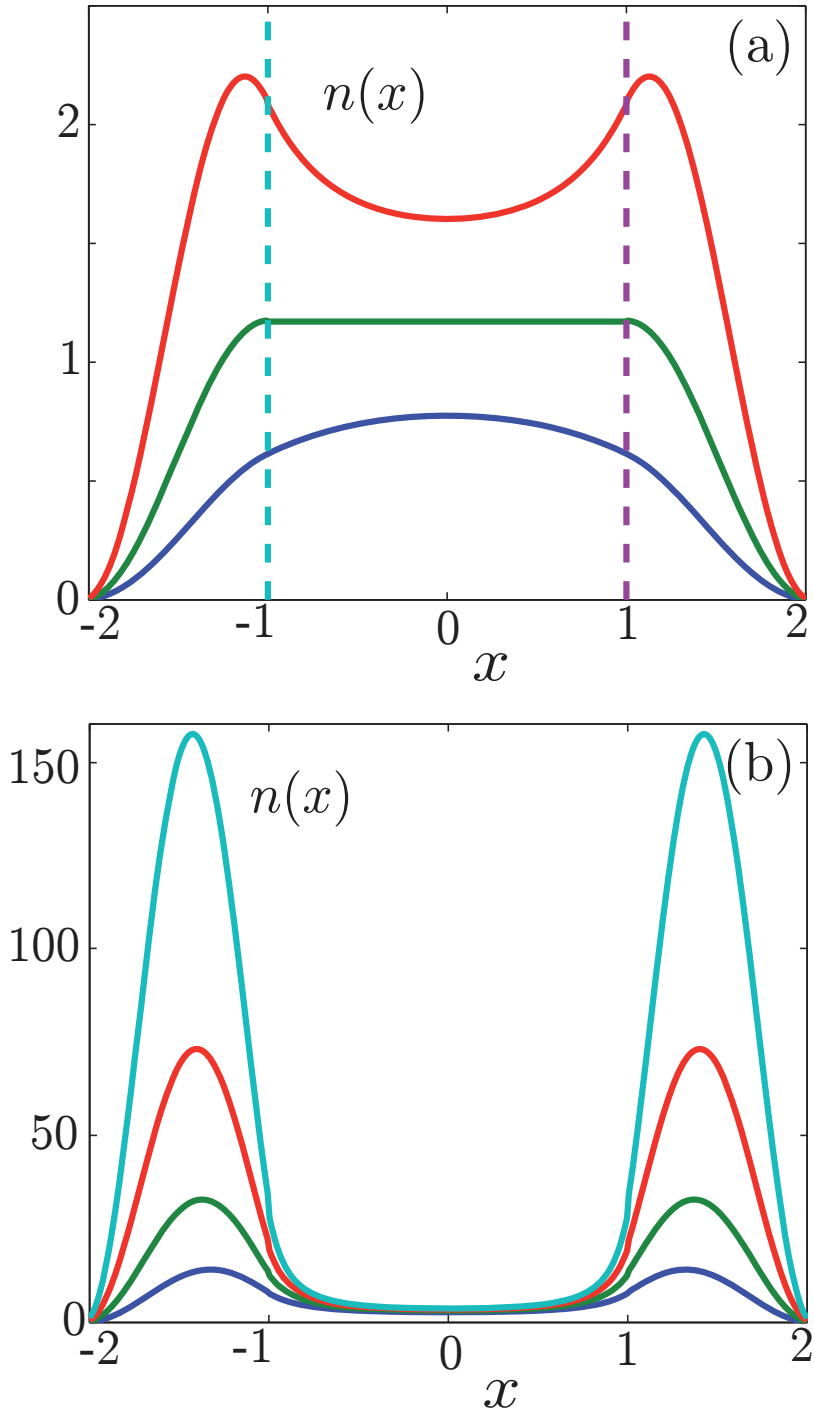,width=0.6\textwidth}
 \caption{[Color online] Spatial distribution of $n(x) = u^2(x)$ for the ground state solutions given by Eqs.  Eq. (\ref{pepa})-(\ref{pipa}) with $L=2, a =1$ and different values of $\lambda$ corresponding to different values of $N$ (a) From the lower to the upper curve $g_0N =2, 3.55$ (corresponding to $\lambda_*$) and $g_0N=6$. (b) From the lower to the upper curve $g_0N=25, 50, 100, 200$.  \label{prima}}
 \end{figure}
 
The  spatial profiles of the ground state density for different values of $g_0N$ shown in Fig. \ref{prima} support our conjecture based on the Thomas-Fermi solution, i.e. the existence of a strong localization of the solution in the region where the interactions vanish.  

Physically it is also remarkable that the atom density in the inner part of the domain, i.e. the region where there are nonlinear interactions, remains almost constant independently on the number of particles in the condensate once a certain critical density is achieved. This region is a nonlinear analogue of the classically forbidden region in ordinary potentials and is energetically less favourable due to the extra repulsive energy provided by the nonlinear interactions. However, the tunneling of atoms in this region is essentially limited to a constant value, \emph{independently of the number of atoms}, which differs essentially from ordinary tunneling. 
 
The supression of tunneling strongly depends  on the value of the scattering length in the outer region, that we will take to be nonzero in what follows. If instead we set $g(x) = g_*$ when $|x|>a$ and study the dependence of the ratio between the maximum atom density and the atom density at $x=0$ (which is a measure of the amplitude of the tunneling), we find a strong dependence on this parameter as it is shown in Fig. \ref{nueva}(a).

This effect is also seen in the atom density profiles when comparing the cases with $g_*=0$ [Fig. \ref{nueva}(b)] and $g_*=0.2$ [Fig. \ref{nueva}(c)] for $N=1000$. Larger values of $N$ lead to a stronger effect.
   
\begin{figure}
 \epsfig{file=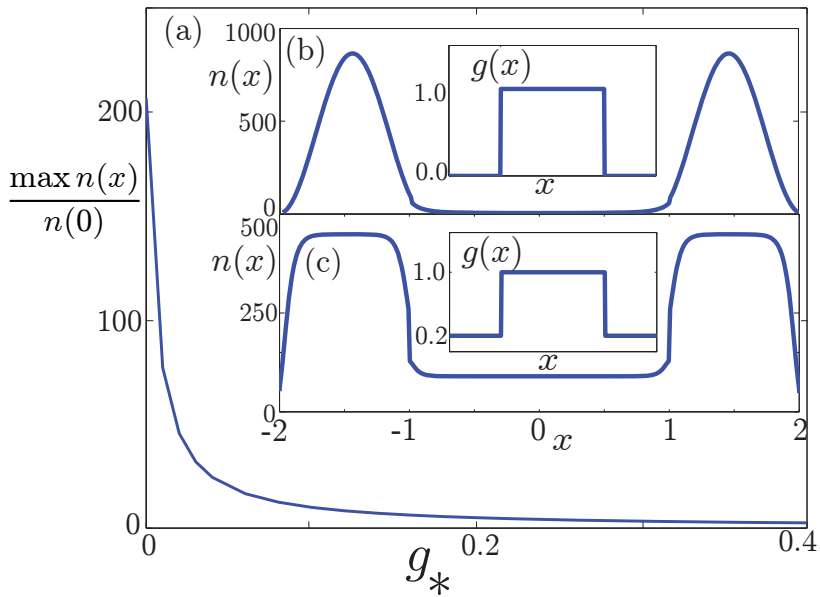,width=0.6\textwidth}
 \caption{[Color online]  (a) Ratio between the maximum atom density and the atom density at $x=0$ as a function of the scaled scattering length $g_*$ on the spatial region $|x|>a$ (for $g_0 = 1$) (b-c) Scaled atom density profiles for (b) $g_*=0$ and (c) $g_* = 0.2$ in both cases for a total scaled number of particles $N=1000$. The insets show the profile of $g(x)$. \label{nueva}}
 \end{figure}
 
Thus we have shown a simple example which already presents the main features of the nonlinear phenomena to be studied with more detail in this paper:  the localization of the ground state in the regions where the interactions vanish, the supression of tunneling induced by the presence of inhomogeneous interactions and the limited range of variation for the eigenvalue $\lambda$ (chemical potential). 
 
Related phenomena have been described in the mathematical analysis of similar logistic equations \cite{16} in bounded domains  \cite{Lopez}. However to consider situations of real physical interest we must move to unbounded domains where the analysis is much more complicated and goes beyond our particular simple example and previous theoretical studies \cite{16,Lopez}.
 
 \section{Numerical results}
 \label{numerics}
 
 \subsection{Introduction}
 
 In this section we will consider more realistic situations corresponding to problems posed on unbounded domains and strongly confining potentials satisfying the condition \eqref{hip:V}. In the examples to be presented in this section we will use harmonic potentials $V(x) \propto x^2$ which arise naturally in the modelling of magnetic trapping potentials for ultracold atoms.
 
 To study the localization phenomenon we have computed numerically the ground state solutions of equations of the form \eqref{gp_dimensionless} using a standard steepest descent algorithm on the energy functional complemented with a projection over the set of solutions of given $L^2$-norm \cite{Bao}. 
 
 We will consider nonlinear coefficients $g(x)$ also of realistic forms which vanish or are very small on certain sets. In this section we will use a more physical language thinking on the applications of our results to BEC systems.
 
 \subsection{One dimensional systems}
 
  Let us consider Eq. (\ref{gs}) in one-dimensional scenarios ($N=1$), and take the  potential to be of the form $V(x) = 0.02 x^2$.  Because of the possibilities for optical control of nonlinearities which are posible in BEC systems we will choose the  nonlinear coefficient to be of the form
  \begin{equation}
  g_a(x) = \exp\left(-x^2/2a^2\right),
  \end{equation}
  for different values of $a\leq +\infty$. These choices will allow us to study different degrees of localization of the interactions starting from the case of no localization.
  
   \begin{figure}
 \epsfig{file=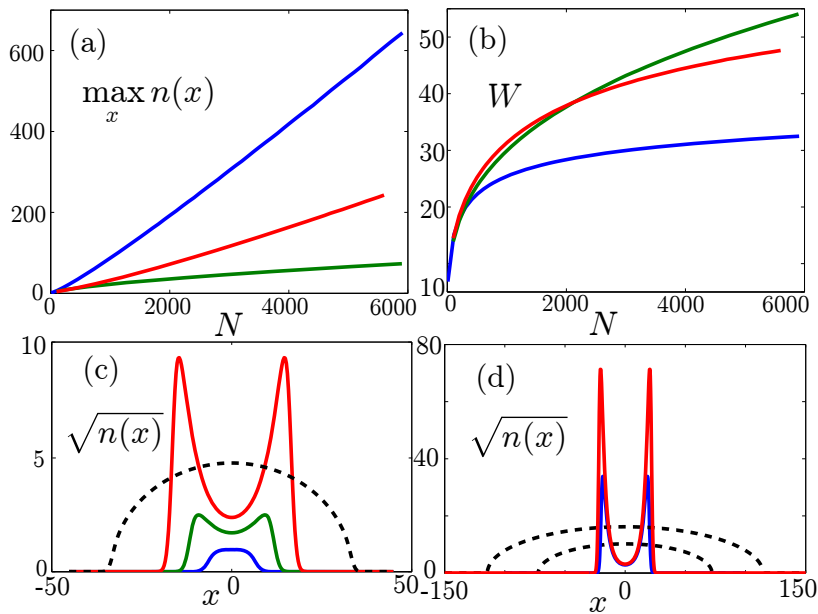,width=0.6\textwidth}
 \caption{[Color online] Ground states of Eq. (\ref{gs}) for $V(x) = 0.02 x^2$ and $g_0(x) = 1, g_1(x)=\exp (-x^2/200)$, and $g_2(x) = \exp(-x^2/50)$,  for different values of the scaled number of particles $N$. (a) Maximum particle density $\max_x n(x)$ and (b) width $W^2 = \int x^2 n(x)dx/N$ for $g_0$ (green), $g_1(x)$ (red) and $g_2(x)$ (blue). (c) Spatial profiles of $\sqrt{n(x)}$ for $N=10$ (blue), $N=100$ (green), $N=1000$ (red) for $g(x)=g_1(x)$. The dashed black line is the ground state with homogeneous interactions and $N=1000$. (d) 
Same as (c) but for $N=10000$ (blue), and $N=40000$ (green),  in comparison with the case of spatially homogeneous interactions (dashed black lines). \label{segunda}}
 \end{figure}

Our results are summarized in Fig. \ref{segunda}. First, in Fig. \ref{segunda}(a) we observe how the maximum density ($n(x) = u^2(x)$)  increases drastically for spatially decaying nonlinearities (blue and red curves) as a function of the effective number of particles in the quasi-one dimensional condensate $N$. This amplitude growth is due to a strong localization effect near the region where $g(x)$ vanishes as it is seen in Fig. \ref{segunda}(c,d). In contrast, the condensate density for spatially homogeneous interactions grows slowly according to the Thomas-Fermi  prediction $\max n(x) \propto N^{2/3}$.  When the number of particles is small, the size of the atomic cloud  is smaller than the  localization region of $g(x)$. For larger values of $N$ the ground state extends beyond the localization region of $g(x)$ and the atom density becomes more and more localized near its edge. This effect is more clear for larger $N$ and is accompanied by a saturation in the amplitude growth in the region where $g(x)$ is far from zero [Fig. \ref{segunda}(d)]. 

   \begin{figure}
 \epsfig{file=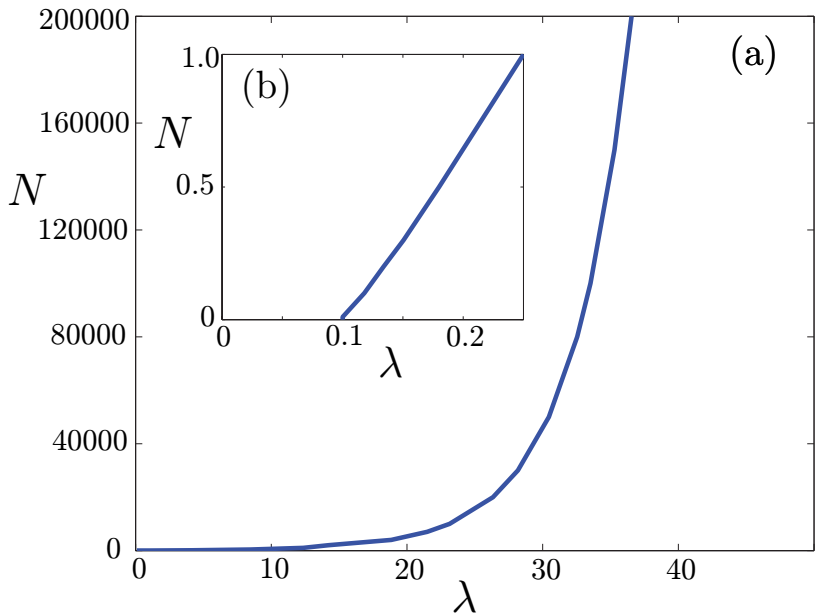,width=0.6\textwidth}
 \caption{[Color online] (a-b) Dependence of the number of particles in the ground state $N$ on the eigenvalue (chemical potential) $\lambda$
 for $g(x) = \exp(-x^2/200)$,  and $V(x) = 0.02 x^2$. Subplot (b) shows the behavior in for small $N$ where the eigenvalue approaches the linear limit.
\label{XX}}
 \end{figure}
 
In the case of spatially homogeneous interactions the width grows according to the law $W \propto N^{1/3}$  [Fig. \ref{segunda}(b)]. When interactions are spatially dependent and due to the localization of the amplitude close to the zero of $g(x)$ the width growth saturates for large values of $N$ to a value depending on the size of $g(x)$.

It is interesting to point out that although the nonlinearity does not vanish strictly anywhere there is an effect similar to the existence of the cutoff discussed in the previous example, the number of particles increasing enormously for increasing values of $\lambda$, as it is shown in Fig. \ref{XX}

These effects are even more clear when the nonlinearity decays to zero faster and is exactly zero on a certain region. For instance, taking a nonlinear coefficient given by $g_4(x) = (1-0.001 x^2)_+$ (i.e. an inverted parabola with maximum amplitude $g=1$ at $x=0$ and zero values for $|x|>31.6$) as shown in Fig. \ref{tertia}, we see that the maximum densities are even higher than for $g_1(x), g_2(x)$. In this case, in comparison with our first simple example, we can see an even stronger localization since the existence of the potential makes energetically more favourable the localization close to the point where the interactions vanish.

In this case we also observe the existence of a cutoff in the values of the eigenvalue $\lambda$ (chemical potential) close to a certain value $\lambda_c$. An example is shown in Fig. \ref{tertia2}.

  \begin{figure}
 \epsfig{file=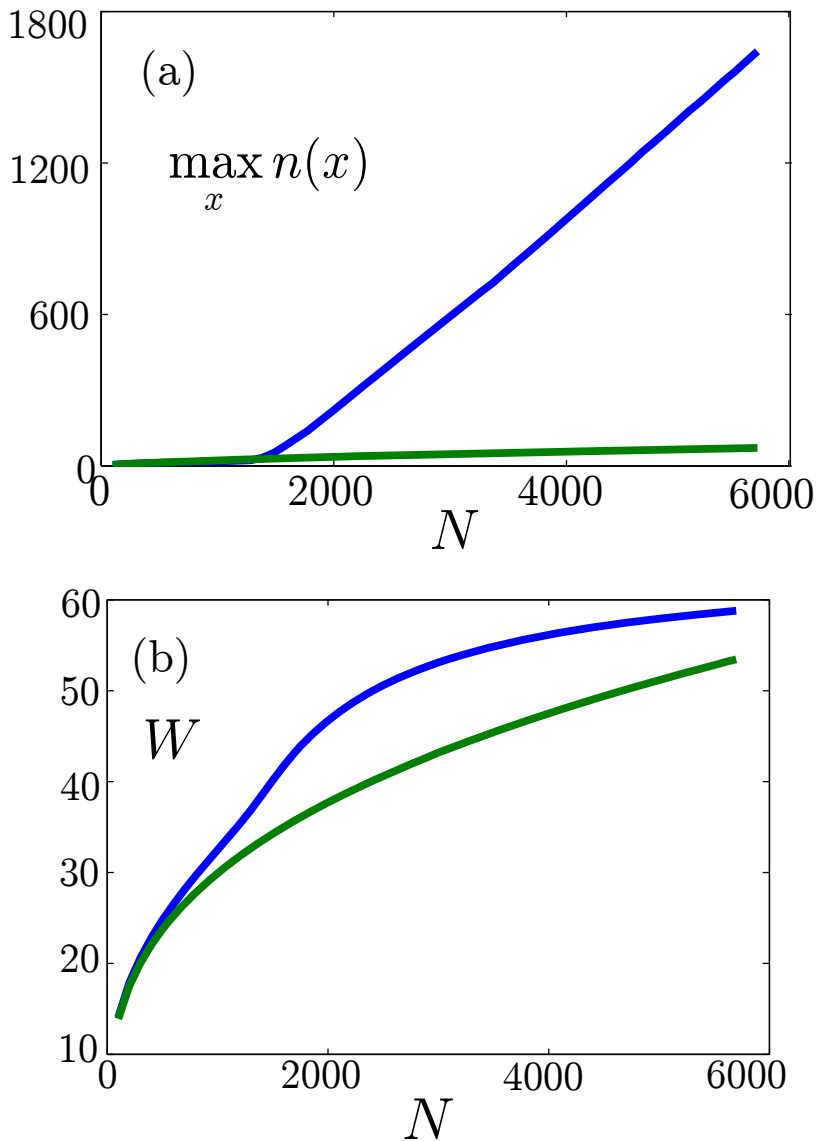,width=0.6\textwidth}
 \caption{[Color online] Ground state maximum particle density (a) and width $W = \int x^2 n(x)dx/N$ (b) for $V(x) = 0.02 x^2$ and $g_0(x) = 1$ (red lines) and $g_4(x)=(1-0.001 x^2)_+$ (blue lines). \label{tertia}}
 \end{figure}
 
   \begin{figure}
 \epsfig{file=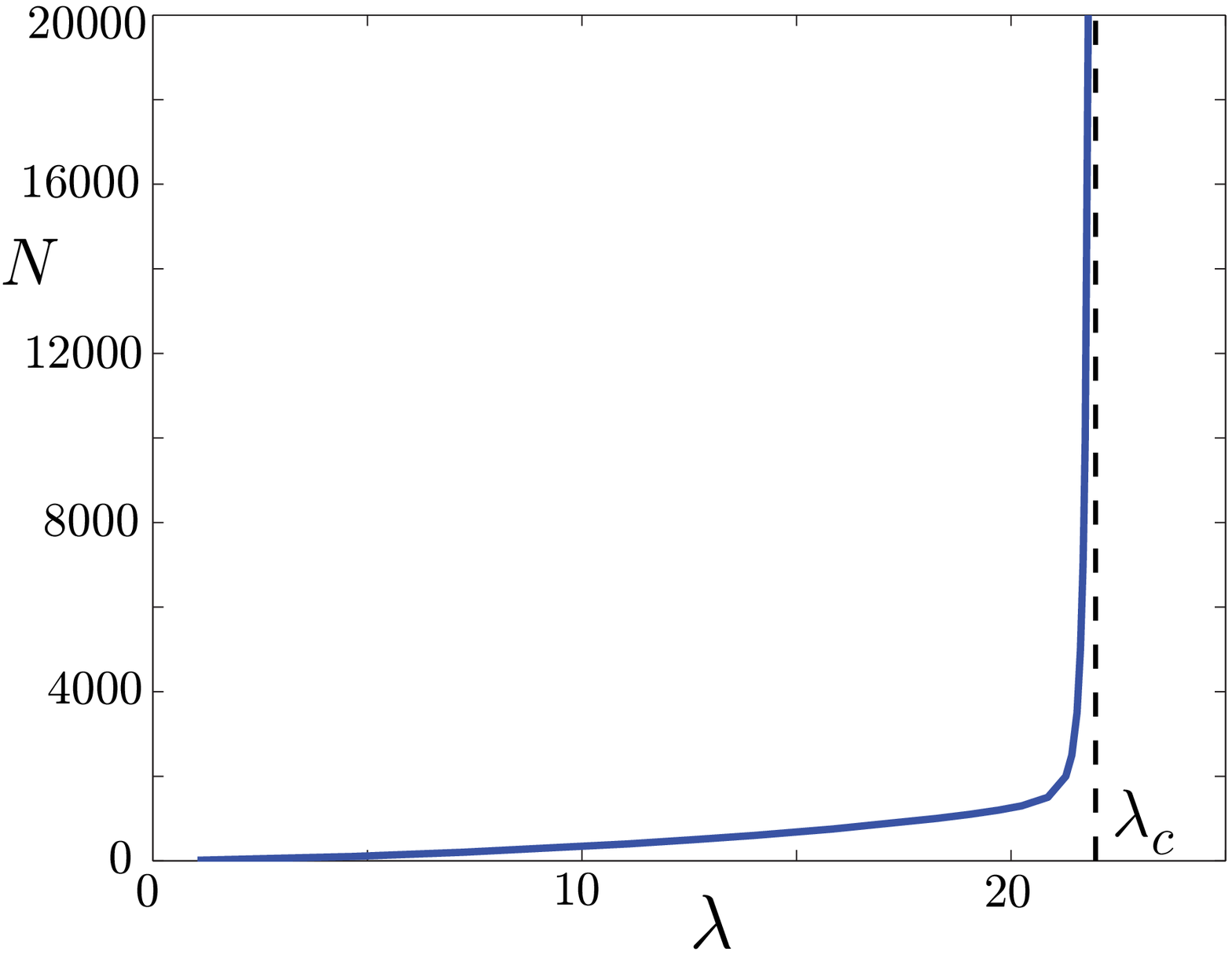,width=0.6\textwidth}
 \caption{[Color online]  Dependence of the number of particles in the ground state $N$ on the eigenvalue (chemical potential) $\lambda$
 for $g_4(x)=(1-0.001 x^2)_+$,  and $V(x) = 0.02 x^2$. \label{tertia2}}
 \end{figure}
 
As a final example, in Fig. \ref{cuarta} we show again the localization phenomenon but now near $x=0$ for localized interactions given by $g(x) = 1- \exp(-x^2/50)$ and $V(x) = 0.02 x^2$. In this case interactions vanish in a single point $x=0$ being very small in its vicinity. It is interesting  to see how localized the density becomes to ``avoid" penetrating into the regions with appreciable values of the scattering length.
    \begin{figure}
 \epsfig{file=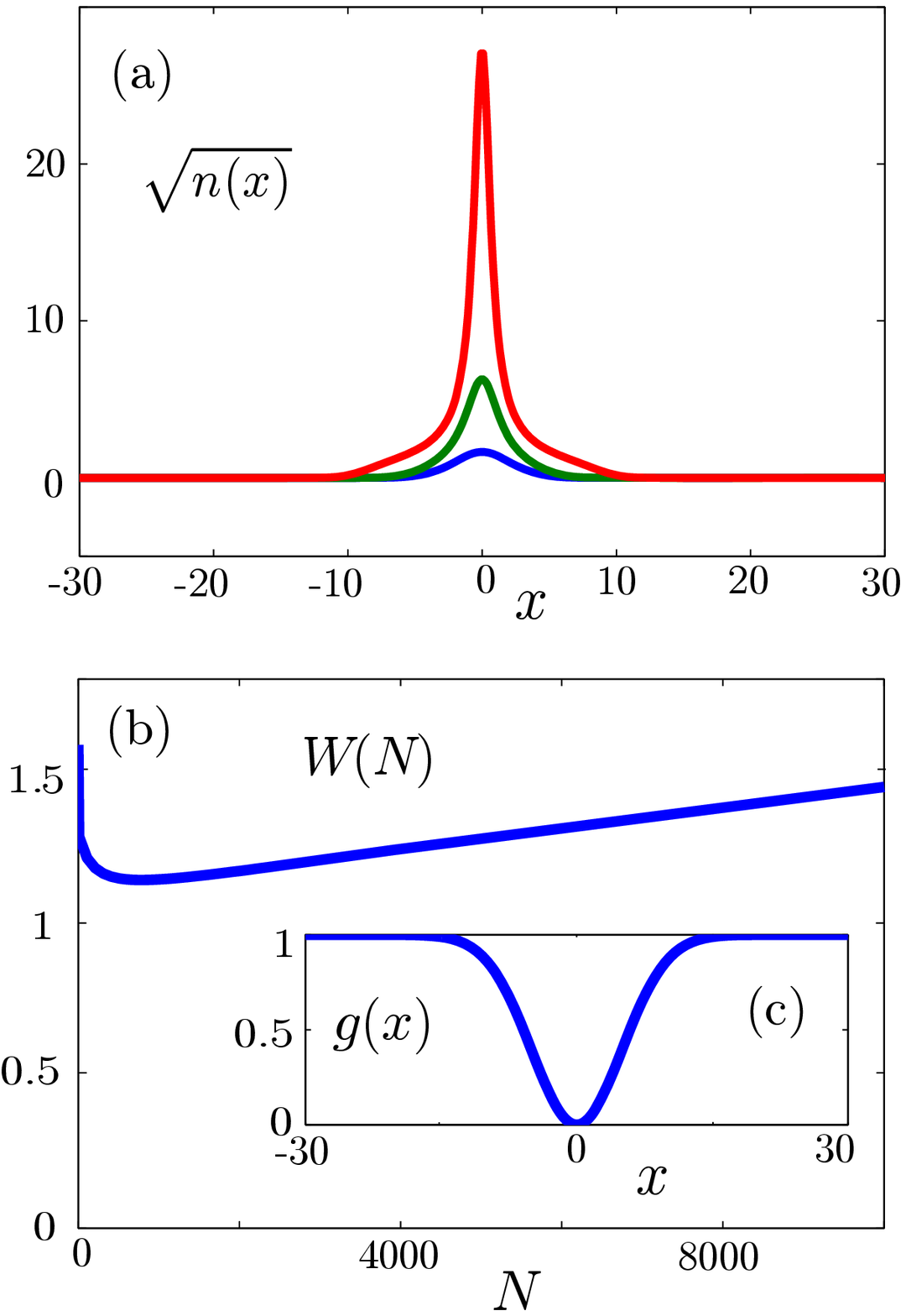,width=0.6\textwidth}
 \caption{[Color online] Ground states for $V(x) = 0.02 x^2$ and $g(x) = 1-\exp(-x^2/50)$ (shown in panel (c)). (a) $\sqrt{n(x)}$  for $N=10$ (blue line), $N=100$ (green line) and $N=1000$ (red line). (b) Condensate width as a function of $N$. \label{cuarta}}
 \end{figure}
 
\subsection{A three-dimensional example}

 The same localization phenomenon  is also present in  three-dimensional geometries. 
 
 As a simple example we will take a set of parameters of applicability to BECs describing a $^{87}$Rb condensate where the nonlinear interactions  $gN$ are typically in the range $10^3-10^5$ for typical numbers of atoms and trap sizes. As to the potential and nonlinearity we will take them to be of the form 
 \begin{subequations}
 \begin{eqnarray}
 V(x,y,z) & =& \frac{1}{2}\left(x^2+y^2+z^2\right), \\
 g(x,y,z)  & = & g_0\left[1-\exp\left(-\frac{x^2+y^2+z^2}{2w^2}\right)\right],
 \end{eqnarray}
\end{subequations}
 with $w = 0.5$ (i.e. half the trap width). Our results are summarized in Fig. \ref{3D} where again the same localization phenomenon, here inside the ``hole" of the nonlinear coefficient, is seen. For larger number of atoms the localization phenomenon can be even enhanced as in the previous one dimensional examples.

  \begin{figure}
 \epsfig{file=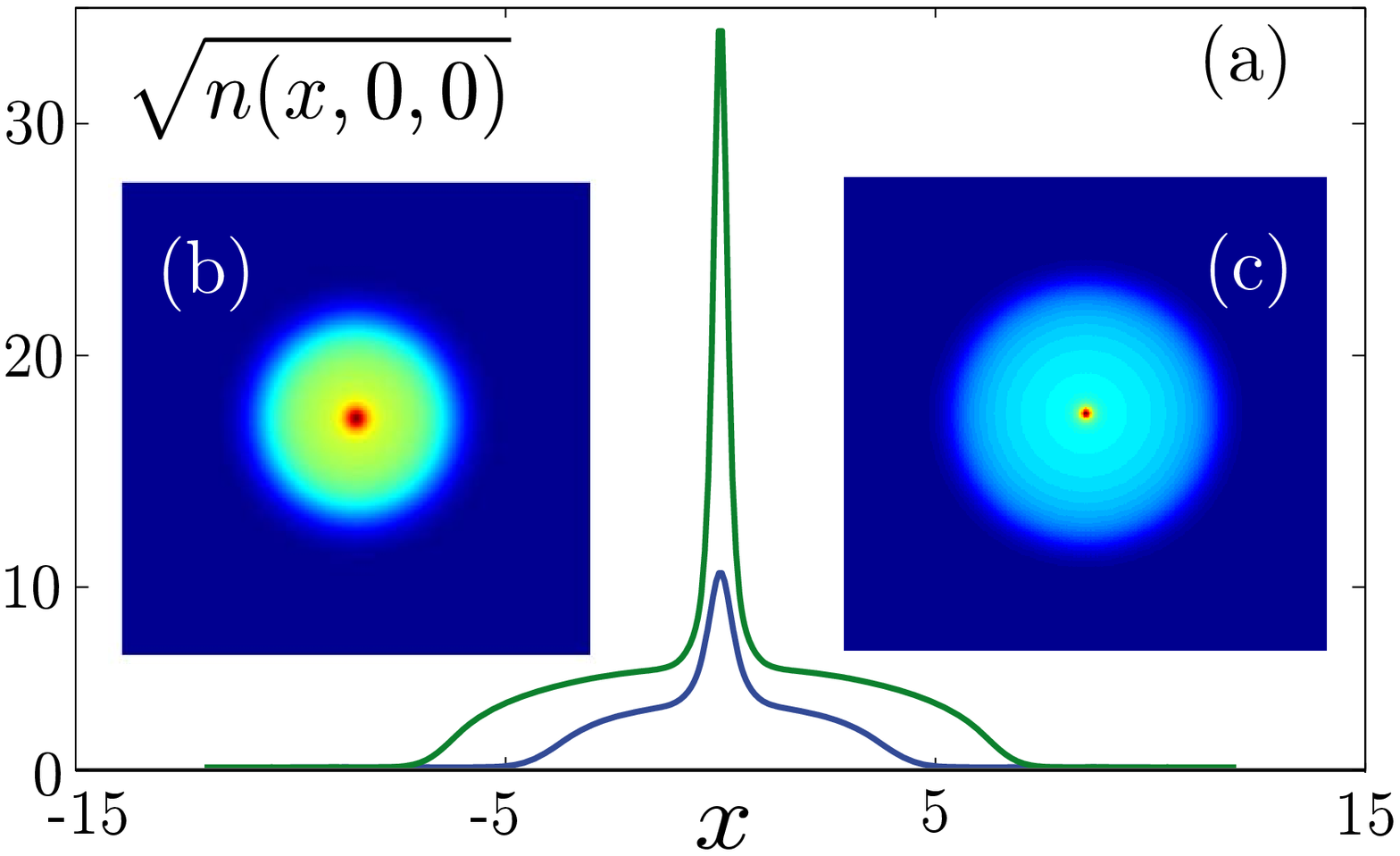,width=0.6\textwidth}
\caption{[Color online] (a) Comparison of the profile of $\sqrt{n(x,0,0)}$ for $g_0N = 10^3$ (blue line) and $g_0N= 10^4$ (red line) showing the localization of the atom density. (b-c) Pseudocolor plot of $\sqrt{n(x,y,0)}$ for (b) $g_0N =  10^3$ on the spatial region $(x,y) \in [-10,10]\times [-10,10]$ and 
(c) $g_0N =  10^4$ on the spatial region $(x,y) \in [-15,15]\times [-15,15]$ \label{3D}}
\end{figure}

 \section{Theory} 

\subsection{Introduction}

In Sections \ref{toy} and \ref{numerics} we have presented an analytic simple example and several numerical simulations showing a striking localization of the solutions of Eq. (\ref{eq:rd}) close to the zeros of the nonlinear coefficient $g(x)$. Here, we will provide a rigorous support for two of the observed phenomena: (i) the existence of a finite range of values of the chemical potential (i.e. the existence of a cutoff) and (ii) the divergence of the amplitude of the solution close to that point.

In this section we will use a more formal and precise language and  provide some mathematically rigorous results which will support our results in general scenarios. 

\subsection{Preliminaries and notation}
\label{preliminaries:notation}

We shall fix the potential $V$ satisfying hypothesis (\ref{hip:V}). Following \cite{welsh},
let $\Omega \subset \mathbb{R}^N  $ be an open nonempty set, possibly unbounded, with boundary regular enough, let us denote by
\begin{equation}\label{def:H:Omega}
H(\Omega ,V):=\left\{ u\in H^1_0( \Omega)\ : \ \int_{ \Omega} V\, u^2 < +\infty \right\},
\end{equation}
$H(\Omega ,V)$ is the completion of $C_0^\infty (\Omega)$ in the metric derived from the norm
\begin{equation}\label{norm:H:Omega}
\|u\|:=\left(\int_{ \Omega} |\nabla u|^2 + V\, u^2  \right)^{1/2},
\end{equation}
and $H(\Omega ,V)$  is  a Hilbert space with the scalar product
\begin{equation}\label{escalar:product:H:Omega}
\langle u, v\rangle :=\int_{ \Omega} \nabla u \nabla v + V\, uv.
\end{equation}

Let us consider the operator $L$ defined by 
\begin{equation}\label{def:op:L:Omega}
Lu:= -\Delta  u + V\, u, \qquad \mbox{for } u\in D(L,\Omega )
\end{equation}
where
\begin{equation}\label{def:dom:L:Omega}
D(L,\Omega ):=H(\Omega ,V)\cap
\{ u\in L^2(\Omega )\ :\ -\Delta  u + V\, u \ \in L^2( \Omega) \}.
\end{equation}
Let us first write the following lemma

\begin{lemma}\label{lemma:1}
Let $\Omega \subset \mathbb{R}^N  $ be an open nonempty set, possibly unbounded, with boundary regular enough. If  $V$ satisfy
hypothesis {\rm (\ref{hip:V})}, then the following assertions are true

\begin{description}
\item[{\rm i)}] $H(\Omega ,V)\hookrightarrow L^2( \Omega)$ with compact embedding

\item[{\rm ii)}] the operator $L$ defined by Eq. (\ref{def:op:L:Omega}), has a discrete spectrum, noted by $\sigma (L, \Omega ),$
i.e. $\sigma (L, \Omega )$ consists of an infinite sequence of isolated eigenvalues $\{ \sigma _n ( \Omega)\}\uparrow\infty $
with finite multiplicities.

\item[{\rm iii)}] Moreover, the Rayleigh sup-inf characterization for the eigenvalues holds, and in particular the first
eigenvalue, denoted by $\ \sigma _1( \Omega ),\ $ satisfies
\begin{equation}\label{cociente:Rayleigh}
\sigma _1( \Omega )= \inf_{\psi\in H( \Omega,V)} \ \frac{\displaystyle\int_{ \Omega} |\nabla \psi|^2
+ V\, \psi^2}{\displaystyle\int_{ \Omega} \, \psi^2}
\end{equation}

\item[{\rm iv)}] the first eigenvalue  is positive, simple with a positive eigenfunction, $\ \phi_1( \Omega )>0,$ and there is no
other eigenvalue with a positive eigenfunction.
\end{description}
\end{lemma}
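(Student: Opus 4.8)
The plan is to establish Lemma~\ref{lemma:1} by reducing everything to the classical spectral theory of a self-adjoint operator with compact resolvent, so the key really is part (i); once the compact embedding is in hand, parts (ii)--(iv) are standard consequences of the spectral theorem and of Krein--Rutman / Perron--Frobenius type arguments applied to the semigroup generated by $-L$.

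First I would prove (i). The embedding $H(\Omega,V)\hookrightarrow L^2(\Omega)$ is continuous since $\|u\|_{L^2(\Omega)}^2 \le \frac{1}{V_0}\int_\Omega V u^2 \le \frac{1}{V_0}\|u\|^2$ on the region where $V\ge V_0>0$, combined with the Poincar\'e inequality on any bounded piece; more importantly, the hypothesis \eqref{hip:V} that $V(x)\to\infty$ as $|x|\to\infty$ provides the tightness needed for compactness. Concretely, given a bounded sequence $(u_n)$ in $H(\Omega,V)$, I would extract a weakly convergent subsequence; on each ball $B_R$ the embedding $H^1(B_R\cap\Omega)\hookrightarrow L^2(B_R\cap\Omega)$ is compact (Rellich--Kondrachov), giving strong $L^2_{\mathrm{loc}}$ convergence; and the tail estimate $\int_{\Omega\setminus B_R} u_n^2 \le \big(\inf_{|x|\ge R}V(x)\big)^{-1}\int_\Omega V u_n^2 \le C\big(\inf_{|x|\ge R}V\big)^{-1}$ goes to zero uniformly in $n$ as $R\to\infty$. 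A standard diagonal argument then upgrades local strong convergence plus uniform smallness of tails to strong convergence in $L^2(\Omega)$. This is the step I expect to be the main obstacle, only in the sense that it is the one substantive analytic input; the rest is bookkeeping. I would either carry out this argument directly or simply cite \cite{welsh} for it, since the functional-analytic framework is taken from there.

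Next, (ii) and (iii): since $L$ is symmetric and, via the form \eqref{escalar:product:H:Omega}, associated with the closed quadratic form $Q(u)=\int_\Omega |\nabla u|^2 + V u^2$ on $H(\Omega,V)$, the Friedrichs extension is self-adjoint, bounded below by $\sigma_1(\Omega)>0$ (positivity of $\sigma_1$ follows because $V\ge 0$ and, on the unbounded part, $V$ is bounded below away from zero, so $Q(u)\ge c\|u\|_{L^2}^2$), and its inverse $L^{-1}: L^2(\Omega)\to L^2(\Omega)$ factors through $H(\Omega,V)$, hence is compact by (i). A compact self-adjoint positive operator has a discrete spectrum accumulating only at $0$, so $L$ has eigenvalues $\sigma_n(\Omega)\uparrow\infty$ with finite multiplicities, and the min-max (Courant--Fischer) characterization gives \eqref{cociente:Rayleigh} for $\sigma_1$ and the analogous sup-inf formulas for the higher eigenvalues.

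Finally, (iv): the first eigenfunction $\phi_1$ minimizes the Rayleigh quotient \eqref{cociente:Rayleigh}; since $|\nabla |u|\,| = |\nabla u|$ a.e. and $|u|\in H(\Omega,V)$ whenever $u$ does, $|\phi_1|$ is also a minimizer, so we may take $\phi_1\ge 0$; the strong maximum principle (or Harnack's inequality) applied to $-\Delta \phi_1 + V\phi_1 = \sigma_1\phi_1 \ge 0$ forces $\phi_1>0$ in $\Omega$. Simplicity follows because two nonnegative eigenfunctions for the same eigenvalue would, after normalization, have a sign-changing combination that is still an eigenfunction, contradicting positivity; equivalently one invokes the Krein--Rutman theorem for the positivity-improving compact operator $L^{-1}$. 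The same Krein--Rutman argument shows no eigenvalue other than $\sigma_1$ can have a positive eigenfunction, since a positive eigenfunction of $L^{-1}$ must correspond to its spectral radius and be unique up to scaling; any eigenfunction for $\sigma_n$ with $n\ge 2$ is $L^2$-orthogonal to $\phi_1>0$ and hence must change sign.
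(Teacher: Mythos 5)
Your proposal is correct and is essentially the paper's own approach: the paper gives no proof of Lemma~\ref{lemma:1}, simply citing \cite{welsh} for parts i)--iii) and Krein--Rutman theory plus compactness for part iv), and your argument is exactly the standard one behind those citations (tail estimate from $V(x)\to\infty$ combined with Rellich--Kondrachov and a diagonal argument for the compact embedding, the spectral theorem for the compact self-adjoint resolvent $L^{-1}$ with the Courant--Fischer characterization, and the variational/maximum-principle or Krein--Rutman argument for the simple positive principal eigenvalue). The only points needing a touch more care are routine: continuity of the embedding and positivity of $\sigma_1(\Omega)$ should be obtained via a cutoff-plus-Poincar\'e (or Sobolev) argument, or from attainment of the infimum in \eqref{cociente:Rayleigh}, rather than applying Poincar\'e ``on a bounded piece'' directly, since $u$ need not vanish on the artificial interior boundary.
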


This lemma consists of known results, being folk for $N=1,2.$ In higher dimensions, parts i)-iii) are contained in \cite{welsh} for $N>2,$
more precisely, see \cite[Theorem 1.2]{welsh} and the beginning of its proof for parts i) and ii) respectively, see \cite[Lemma
2.1]{welsh} for part iii). With respect to part iv), it is also well known that the elliptic operator $L$ as defined in
(\ref{def:op:L:Omega}) admits a unique principal eigenvalue in $\Omega.$ This principal eigenvalue is the bottom of the spectrum
of $L$ in the  functional space $H(\Omega,V),$ and it admits an associated positive principal eigenfunction, denoted by $\phi_1>0.$
This result follows from the Krein--Rutman theory and from compactness arguments (cf. \cite{Krein-Rutman} and
\cite{Krasnoselskij-Lifshits-Sobolev}).

In general, the domain of the operator, the eigenvalues and the eigenfunctions depend not only on the domain $ \Omega$ but also
on the potential $V$ i.e.   $D(L, \Omega )=D(L,\Omega ,V),$ $\sigma _n  ( \Omega)= \sigma _n  (\Omega,V),$ $\phi_n ( \Omega)=
\phi_n (\Omega,V)$ and so on. We will consider a fixed $V$ and in what follows we will skip the dependence on $V$ to
clarify the notation, except for the ambiguous cases.

The following lemma is the Lax-Milgram lemma, see \cite[Lemma V.8]{brezis}. Let $\left[H(\Omega ,V)\right]^*$ denote the dual
space of all linear and continuous functionals defined on $ H(\Omega ,V),$

\begin{lemma}
\label{lemma:lax:milgram}
For any $f\in \left[H(\Omega ,V)\right]^*$ there exists a unique $u\in H(\Omega ,V)$ such that
\begin{equation}\label{eq:weak:sol}
\int_{ \Omega} \nabla u \nabla \psi + V\, u\psi= \int_{ \Omega} f\psi, \qquad \forall \psi\in H(\Omega ,V).
\end{equation}
\end{lemma}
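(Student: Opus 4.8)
The plan is to recognize that the bilinear form on the left-hand side of Eq.~(\ref{eq:weak:sol}) is precisely the scalar product $\langle\cdot,\cdot\rangle$ of the Hilbert space $H(\Omega,V)$, so that the assertion reduces to the Riesz representation theorem; indeed, one can simply invoke \cite[Lemma V.8]{brezis}, of which this is the symmetric special case.

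First I would observe that, by definitions (\ref{norm:H:Omega})--(\ref{escalar:product:H:Omega}), for $u,\psi\in H(\Omega,V)$ the form $a(u,\psi):=\int_{\Omega}\nabla u\,\nabla\psi+V\,u\psi$ coincides with $\langle u,\psi\rangle$; in particular it is bilinear, symmetric, bounded (by Cauchy--Schwarz, $|a(u,\psi)|\le\|u\|\,\|\psi\|$), and coercive, since $a(u,u)=\|u\|^2$. Since $V\ge 0$, the quantity $\|\cdot\|$ of (\ref{norm:H:Omega}) is a genuine norm and, as recalled in Section~\ref{preliminaries:notation}, $H(\Omega,V)$ equipped with $\langle\cdot,\cdot\rangle$ is a complete inner-product space, hence a Hilbert space. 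I would then apply the Riesz representation theorem to the given $f\in\left[H(\Omega,V)\right]^*$: there exists a unique $u\in H(\Omega,V)$ with $\langle u,\psi\rangle=\langle f,\psi\rangle$ for all $\psi\in H(\Omega,V)$, where $\langle f,\psi\rangle$ denotes the duality pairing, written here as $\int_{\Omega}f\psi$ (an abuse of notation that becomes literal whenever $f$ is represented by a locally integrable function, as in the applications to Eq.~(\ref{eq:rd})). Rewriting $\langle u,\psi\rangle=a(u,\psi)$ then gives exactly (\ref{eq:weak:sol}).

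For uniqueness, if $u_1$ and $u_2$ both satisfy (\ref{eq:weak:sol}), subtracting the two identities and testing with $\psi=u_1-u_2$ yields $\|u_1-u_2\|^2=a(u_1-u_2,u_1-u_2)=0$, so $u_1=u_2$; this coercivity estimate is also built into the Riesz statement. There is essentially no obstacle here, precisely because the form is symmetric and equals the inner product: the only point genuinely requiring the groundwork of Section~\ref{preliminaries:notation} is the completeness of $H(\Omega,V)$ under $\|\cdot\|$, which is why that space was introduced there as the completion of $C_0^\infty(\Omega)$ in this norm. Were one proving instead the general, non-symmetric Lax--Milgram lemma, the substantive step would be the surjectivity of the bounded operator $A$ defined by $\langle Au,\psi\rangle=a(u,\psi)$, established via a closed-range-plus-density argument; that complication does not arise in the present symmetric setting.
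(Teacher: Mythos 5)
Your proposal is correct and follows essentially the same route as the paper, which simply invokes \cite[Lemma V.8]{brezis}: since the bilinear form in (\ref{eq:weak:sol}) is exactly the scalar product (\ref{escalar:product:H:Omega}) of the Hilbert space $H(\Omega,V)$, the result is the symmetric (Riesz representation) case of Lax--Milgram. Your additional remarks on coercivity, uniqueness, and the duality-pairing notation $\int_\Omega f\psi$ are accurate but add nothing beyond the cited lemma.
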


\begin{remark}
\label{compactness} The above Lemma can be understood in the following way, the inverse operator $ L^{-1}:\left[H(\Omega
,V)\right]^*\to H(\Omega ,V)$ is well defined. Thanks to the compact embedding in lemma \ref{lemma:1}.i), we can consider the
inverse operator $ L^{-1}$ restricted to $L^2(\Omega )$ as a compact operator from $L^2(\Omega )$ into itself.
\end{remark}

Next, we will compare the eigenvalues defined on $\Omega$ with the eigenvalues defined in the whole $\mathbb{R}^N.$ We will
denote by $H,$ $D(L),$ $\sigma _n  ,$ $\phi_n , \cdots $ the space, the domain of the operator, the eigenvalues and the
eigenfunctions and so on whenever the operator $L$ is defined on the whole $\mathbb{R}^N.$   More precisely, let us define the
Hilbert space
\begin{equation}\label{def:H}
H:=\left\{ u\in H^1( \mathbb{R}^N) : \int_{ \mathbb{R}^N} V\, u^2 < +\infty \right\},
\end{equation}
and consider the operator $L$ defined by
\begin{equation}\label{def:op:L}
Lu:= -\Delta  u + V\, u, \qquad \mbox{for } u\in D(L)
\end{equation}
where now
\begin{equation}\label{def:dom:L}
D(L):=H\cap
\left\{ u\in L^2( \mathbb{R}^N) : -\Delta  u + V\, u \in L^2( \mathbb{R}^N) \right\}.
\end{equation}
Observe that the above lemmas \ref{lemma:1}, \ref{lemma:lax:milgram} still hold, in particular $H\hookrightarrow L^2( \mathbb{R}^N)$,
with compact embedding whenever $N>2$ and $V$ satisfies hypothesis (\ref{hip:V}). The elliptic operator $L$ as defined in (\ref{def:op:L}) admits a unique
principal eigenvalue  in $\mathbb{R}^N$ noted by $\sigma_1 .$ This principal eigenvalue is the bottom of the spectrum of L in the
function space $H,$ and it admits an associated positive principal eigenfunction.

In the following lemma we collect  the monotonicity properties of the eigenvalues with respect to the domain, see \cite[Theorem
2.3]{welsh} for a proof. As a consequence, we can compare the eigenvalues defined on $\Omega \subsetneq \mathbb{R}^N $ with the
eigenvalues defined in the whole $\mathbb{R}^N.$

\begin{lemma}\label{lemma:2}
Let $\sigma _1(\Omega )\leq \sigma _2(\Omega )\leq \cdots  $ be the eigenvalues of $L,$ with corresponding eigenfunctions $\phi
_1(\Omega ),\ \phi _2(\Omega ) \cdots $ defined on $H(\Omega ).$ For a subdomain $\Omega ^*\subset \Omega$ with boundary regular
enough, let $\sigma _1(\Omega ^*)\leq \sigma _2(\Omega ^*)\leq \cdots  $ be the eigenvalues of $L,$ with corresponding
eigenfuctions $\phi _1(\Omega ^*),\ \phi _2(\Omega ^*) \cdots  $ defined on $H(\Omega ^*),$ then
\begin{equation}\label{monotonicity:sigma}
\sigma _n(\Omega ^*) > \sigma _n(\Omega ).
\end{equation}
\end{lemma}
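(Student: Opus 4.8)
The plan is to deduce the strict inequality \eqref{monotonicity:sigma} from the variational (Rayleigh sup-inf) characterization of the eigenvalues supplied by Lemma \ref{lemma:1}.iii, together with the obvious inclusion of function spaces. First I would recall the Courant--Fischer minimax formula: for each $n$,
\begin{equation}\label{minimax}
\sigma_n(\Omega) = \inf_{\substack{S\subset H(\Omega,V)\\ \dim S = n}}\ \sup_{\substack{\psi\in S\\ \psi\neq 0}}\ \frac{\displaystyle\int_{\Omega}|\nabla\psi|^2 + V\psi^2}{\displaystyle\int_{\Omega}\psi^2},
\end{equation}
which is the $n$-dimensional version of \eqref{cociente:Rayleigh} and is part of Lemma \ref{lemma:1}.iii. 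The key observation is that extension by zero gives an isometric inclusion $H(\Omega^*,V)\hookrightarrow H(\Omega,V)$: any $u\in C_0^\infty(\Omega^*)$ extended by $0$ lies in $C_0^\infty(\Omega)$, and the norm \eqref{norm:H:Omega} is unchanged since the integrand vanishes off $\Omega^*$; passing to completions preserves this. Consequently every $n$-dimensional subspace of $H(\Omega^*,V)$ is (after extension by zero) an admissible competitor in the infimum \eqref{minimax} for $\Omega$, with identical Rayleigh quotients, so $\sigma_n(\Omega)\le\sigma_n(\Omega^*)$. This already proves the non-strict inequality; note also that the compactness in Lemma \ref{lemma:1}.i guarantees each infimum is attained, which is what we will exploit to upgrade to strictness.

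To get the strict inequality, I would argue by contradiction: suppose $\sigma_n(\Omega^*)=\sigma_n(\Omega)$ for some $n$. Let $\phi$ be an eigenfunction of $L$ on $\Omega^*$ associated with $\sigma_n(\Omega^*)$, extended by zero to all of $\Omega$; by the above it is a minimizer realizing $\sigma_n(\Omega)$ as well, i.e. it lies in the span of the first $n$ eigenfunctions of $L$ on $\Omega$ and in particular is itself an eigenfunction of $L$ on $\Omega$ with eigenvalue $\sigma_n(\Omega)$. Hence the zero-extension $\tilde\phi$ satisfies $-\Delta\tilde\phi + V\tilde\phi = \sigma_n(\Omega)\,\tilde\phi$ in $\Omega$ in the weak sense, yet $\tilde\phi\equiv 0$ on the nonempty open set $\Omega\setminus\overline{\Omega^*}$. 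Since $V$ is locally bounded on $\Omega$ (it is finite and, being a trapping potential, smooth away from infinity, so certainly $V\in L^\infty_{\rm loc}$), elliptic regularity makes $\tilde\phi$ a strong/classical solution near any interior point, and the \emph{unique continuation principle} for second-order elliptic operators forces $\tilde\phi\equiv 0$ throughout the connected set $\Omega$ — contradicting that $\phi$ is a nontrivial eigenfunction. Therefore $\sigma_n(\Omega^*)>\sigma_n(\Omega)$.

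The main obstacle is the last step: making the unique continuation argument rigorous under the stated mild hypotheses. One must ensure (a) that the zero-extension $\tilde\phi$ is genuinely a weak solution of the eigenvalue equation across $\partial\Omega^*$ — this uses that $\phi\in H^1_0(\Omega^*)$ so that $\tilde\phi\in H^1$ globally and test functions supported near $\partial\Omega^*$ see no boundary contribution — and (b) that the domain $\Omega$ is connected, so that vanishing on an open subset propagates everywhere; if $\Omega$ fails to be connected one should instead localize the argument to the connected component of $\Omega$ meeting $\Omega^*$, which suffices. An alternative, perhaps cleaner route that sidesteps unique continuation is a direct variational perturbation: if equality held, $\tilde\phi$ would be a critical point of the Rayleigh quotient on $H(\Omega,V)$ that vanishes on an open ball $B\subset\Omega\setminus\overline{\Omega^*}$; perturbing $\tilde\phi$ by a small nonzero bump $\epsilon\eta$ with $\eta\in C_0^\infty(B)$ and computing the second-order expansion of the numerator and denominator of the quotient shows the quotient strictly decreases (the cross term vanishes because $\tilde\phi\equiv0$ on the support of $\eta$, leaving a strictly negative leading correction), contradicting minimality. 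Either way the strictness is forced; I would present the unique-continuation version as the primary argument and remark that it is exactly the point where the $N=1$ case is elementary (an ODE solution vanishing with its derivative at a point is identically zero) while $N\ge2$ needs the Aronszajn--Cordes unique continuation theorem, which applies here since $V$ is locally bounded.
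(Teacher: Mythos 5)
The paper itself does not prove this lemma; it simply cites Welsh's monotonicity theorem (his Theorem 2.3), so your proposal is an attempt to reconstruct the missing argument. The overall strategy (zero-extension of $H(\Omega^*,V)$ into $H(\Omega,V)$ plus the minimax formula for the non-strict inequality, then unique continuation to upgrade to strictness) is the right one and can be completed, but as written it has two genuine problems.

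First, the strictness step is only justified for $n=1$. From $\sigma_n(\Omega^*)=\sigma_n(\Omega)$ you conclude that the zero-extension $\tilde\phi$ of $\phi_n(\Omega^*)$ ``realizes $\sigma_n(\Omega)$, i.e.\ lies in the span of the first $n$ eigenfunctions of $L$ on $\Omega$ and in particular is itself an eigenfunction.'' Neither inference holds for $n\ge 2$: a function whose Rayleigh quotient equals $\sigma_n(\Omega)$ need not lie in that span (take a suitable combination of $\phi_1(\Omega)$ and $\phi_{n+1}(\Omega)$), and even a function in that span need not be an eigenfunction unless the relevant eigenvalues coincide. The standard repair: extend by zero the whole $n$-dimensional space $S^*={\rm span}\{\phi_1(\Omega^*),\dots,\phi_n(\Omega^*)\}$, pick $0\neq\psi\in S^*$ that is $L^2(\Omega)$-orthogonal to $\phi_1(\Omega),\dots,\phi_{n-1}(\Omega)$ (possible by dimension count); then $\sigma_n(\Omega)\le R_\Omega(\psi)\le\sup_{S^*}R_\Omega\le\sigma_n(\Omega^*)=\sigma_n(\Omega)$, so $\psi$ minimizes the quotient on the orthogonal complement of the first $n-1$ eigenfunctions, hence \emph{is} an eigenfunction on $\Omega$, and it still vanishes on $\Omega\setminus\overline{\Omega^*}$; now unique continuation applies. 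You should also note explicitly that strictness requires $\Omega\setminus\overline{\Omega^*}$ to contain a nonempty open set (with $\Omega^*$ equal to $\Omega$ minus a capacity-zero set the eigenvalues coincide), and that local boundedness of $V$ is an extra assumption beyond \eqref{hip:V}.

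Second, your ``cleaner alternative'' avoiding unique continuation is wrong: with $\eta\in C_0^\infty(B)$ and $\tilde\phi\equiv 0$ on $B$, the cross terms indeed vanish, but then
$R(\tilde\phi+\epsilon\eta)=\bigl(A+\epsilon^2\|\eta\|_{H}^2\bigr)/\bigl(B+\epsilon^2\|\eta\|_{L^2}^2\bigr)$ moves \emph{toward} $R(\eta)\ge\sigma_1(B)>\sigma_1(\Omega)$, i.e.\ the quotient strictly \emph{increases}; there is no negative leading correction and no contradiction with minimality. So that fallback should be dropped (for $n=1$ you could instead invoke Lemma \ref{lemma:1}.iv: the principal eigenfunction on $\Omega$ is strictly positive, hence cannot vanish on an open subset), and the unique-continuation route, with the $n\ge2$ fix above, should carry the whole proof.
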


In particular, let $\sigma _1\leq \sigma _2\leq \cdots  $ be the eigenvalues of $L,$ with corresponding eigenvectors $\phi _n $
defined on $H,$ then
\begin{equation}\label{monotonicity:sigma:RN}
\sigma _n(\Omega ) > \sigma _n.
\end{equation}

The following lemma states the Maximum Principle for unbounded domains, see \cite[Theorem IX.27]{brezis}.

\begin{lemma}\label{lemma:PM} {\bf The maximum principle for the Dirichlet problem.}
Let  $\Omega \subset \mathbb{R}^N  $ be an open nonempty set, possibly unbounded, with boundary of class $C^1,$  and assume $V$
satisfy hypothesis {\rm (\ref{hip:V})}. Let $f\in L^2( \Omega)$ and $u\in H(\Omega ,V)$ be such that (\ref{eq:weak:sol}) holds.

Then
\begin{equation}\label{eq:weak:sol:down:up}
\min \{\inf_{\partial\Omega } u, \inf_{\Omega } f\} \leq   u\leq
\max \{\sup_{\partial\Omega } u, \sup_{\Omega } f \}
\end{equation}
where $\sup=\sup \mbox{\rm ess} $ and $\inf=\inf\mbox{\rm ess}$.

In particular, if
\begin{equation}\label{eq:weak:ineq}
u\geq 0 \mbox{ on } \partial \Omega, \quad \mbox{and} \quad f\geq 0 \mbox{ in }  \Omega,
\end{equation}
then
\begin{equation}\label{eq:weak:sol:ineq}
u\geq 0 \mbox{ in }  \Omega, \quad \mbox{and} \quad \|u\|_{L^{\infty}(\Omega )}\leq
\max \{\|u\|_{L^{\infty}(\partial\Omega )}, \|f\|_{L^{\infty}(\Omega )} \}.
\end{equation}
\end{lemma}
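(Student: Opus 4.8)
The plan is to use the Stampacchia truncation method on the weak formulation (\ref{eq:weak:sol}). Since the substitution $f\mapsto-f$, $u\mapsto-u$ again satisfies (\ref{eq:weak:sol}), it suffices to prove the upper bound in (\ref{eq:weak:sol:down:up}); the lower bound then follows by this symmetry, and specializing to $f\ge 0$, $u\ge 0$ on $\partial\Omega$ will give (\ref{eq:weak:sol:ineq}). Because every element of $H(\Omega,V)$ has vanishing trace, $\sup_{\partial\Omega}u=0$, so $k:=\max\{\sup_{\partial\Omega}u,\sup_\Omega f\}=\max\{0,\sup_\Omega f\}$ is either $+\infty$ (and then there is nothing to prove) or a nonnegative real number. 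The goal is to show $v:=(u-k)^+\equiv0$ a.e. First one checks that $v$ is an admissible test function: since $u\le k$ on $\partial\Omega$ and $k\ge0$, the Lipschitz map $t\mapsto(t-k)^+$ vanishes at $0$, so $v\in H^1_0(\Omega)$ with $\nabla v=\mathbf{1}_A\nabla u\in L^2(\Omega)$, where $A:=\{u>k\}$; moreover $0\le v\le u^+$, hence $\int_\Omega Vv^2\le\int_\Omega Vu^2<\infty$, so $v\in H(\Omega,V)$.

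Taking $\psi=v$ in (\ref{eq:weak:sol}) and using $\nabla v=\mathbf{1}_A\nabla u$, $v=\mathbf{1}_A(u-k)$ together with $Vu(u-k)=V(u-k)^2+kV(u-k)$, one gets the energy identity
\[
\int_A|\nabla u|^2+\int_A V(u-k)^2+k\int_A V(u-k)=\int_A f(u-k).
\]
On $A$ we have $u-k>0$ and $f\le k$, so the right-hand side is at most $k\int_A(u-k)$, while the third term on the left is nonnegative; hence $\|v\|^2=\int_\Omega(|\nabla v|^2+Vv^2)\le k\int_\Omega v$. If $k=0$ --- in particular whenever $f\le0$, which is the mechanism behind the positivity assertion --- this already forces $\|v\|=0$, i.e. $u\le0$; combined with the symmetric estimate one obtains $u\ge0$ whenever $f\ge0$ and $u\ge0$ on $\partial\Omega$, which is the first half of (\ref{eq:weak:sol:ineq}).

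When $k>0$, one further step must convert the a priori bound $\|v\|^2\le k\int_\Omega v$ into $v\equiv0$, and here the hypotheses enter: $u\in L^2(\Omega)$ by Lemma~\ref{lemma:1} i), so Chebyshev gives $|A|=|\{u>k\}|\le k^{-2}\|u\|_{L^2}^2<\infty$; the Rayleigh characterization (\ref{cociente:Rayleigh}) gives $\sigma_1(\Omega)\|v\|_{L^2}^2\le\|v\|^2$ with $\sigma_1(\Omega)>0$; $\int_\Omega v\le|A|^{1/2}\|v\|_{L^2}$ by Cauchy--Schwarz; and the growth $V(x)\to\infty$ restricts the region where $V$ is small to a bounded set, which helps keep the measures of the relevant super-level sets under control. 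Feeding these estimates back and iterating over a family of truncation levels decreasing to $k$ (a De Giorgi/Stampacchia iteration, using also the Sobolev embedding of $H^1$ to gain a power of $|A|$ at each step) drives $\|v\|$ to $0$, so $u\le k$ a.e.; this is (\ref{eq:weak:sol:down:up}), and with $f\ge0$, $u\ge0$ on $\partial\Omega$ it gives the remaining half of (\ref{eq:weak:sol:ineq}). Equivalently one may simply invoke \cite[Theorem IX.27]{brezis} after verifying its hypotheses in the present unbounded, weighted framework. I expect this last step --- upgrading the single-truncation estimate to the vanishing of $v$, i.e. a genuinely quantitative control of the super-level sets of $u$ --- to be the main technical obstacle; the reduction by symmetry, the admissibility of the test function, and the positivity case are routine.
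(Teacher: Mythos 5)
The paper itself offers no proof of this lemma beyond the citation of \cite[Theorem IX.27]{brezis}, so the only comparison available is with that route. Your symmetry reduction, the verification that $v=(u-k)^+$ is an admissible test function, and the case $k=0$ are correct, and the $k=0$ case is exactly the Stampacchia-type truncation argument behind Brezis's theorem; note that this already gives the positivity assertion in (\ref{eq:weak:sol:ineq}), which is the only part of the lemma the paper actually uses later.

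The gap is in the $k>0$ case, and it is not merely technical. In passing from the energy identity to $\|v\|^2\le k\int_\Omega v$ you discard the term $k\int_A Vv$, and that is precisely the term the proof needs: in Brezis's Theorem IX.27 the zeroth-order coefficient is identically $1$, and his argument concludes from $\int_\Omega|\nabla v|^2+\int_A(u-f)v=0$ together with $u-f>0$ on $A=\{u>k\}$, i.e. the zeroth-order term is kept, not estimated away. Here the coefficient is $V$, which under (\ref{hip:V}) is only nonnegative and may vanish on a large bounded region, so $Vu-f$ need not be positive on $A$ and constants $k\ge\sup_\Omega f$ are not supersolutions of $-\Delta+V$. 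Consequently no De Giorgi/Stampacchia iteration built only on $\|v\|^2\le k\int_\Omega v$, Chebyshev, (\ref{cociente:Rayleigh}) and Sobolev embeddings can force $v\equiv 0$: take $\Omega=(0,L)$ with $L>\pi$, $V\equiv 0$ on $\Omega$ (admissible, since (\ref{hip:V}) constrains $V$ only at infinity), $u(x)=\sin(\pi x/L)$ and $f=(\pi/L)^2\sin(\pi x/L)$; then every hypothesis you use holds, $k=(\pi/L)^2>0$, yet $\sup_\Omega u=1>k$. So the inequality you plan to iterate toward is not a consequence of those estimates; the bound $u\le\max\{\sup_{\partial\Omega}u,\sup_\Omega f\}$ for $-\Delta+V$ hinges on having $V\ge 1$ on the super-level set (as in Brezis, where $V\equiv 1$), or on replacing $\sup_\Omega f$ by a constant of the type $\sup_\Omega(f/V)$. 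For the same reason your fallback of ``invoking \cite[Theorem IX.27]{brezis} after verifying its hypotheses'' does not go through verbatim for the operator $-\Delta+V$; identifying and supplying the missing lower bound on the zeroth-order coefficient is the real content that both your sketch and the paper's one-line citation leave open, whereas your $k=0$ argument fully settles the part of the lemma that the paper relies on.
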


\subsection{Main result}

We now consider  Eq. (\ref{eq:rd}) as a bifurcation problem, considering $\lambda$ as a real parameter, we look for pairs
$(\lambda ,u_{\lambda})\in \mathbb{R}\times H $ such that  $u_{\lambda}$ is a positive solution of (\ref{eq:rd}).

As in Sec. \ref{preliminaries:notation}, let $\sigma_1$ stand for the first eigenvalue of the eigenvalue problem
\begin{equation}
\label{def:sigma1:RN} \left( -\frac{1}{2}\Delta   + V(x)  \right)\phi _1:= \sigma_1\phi _1, \quad x\in\mathbb{R}^N , \quad \phi
_1(x)\to 0, \mbox{ as } |x|\to \infty ,
\end{equation}
and given an open regular enough  domain $\Omega\subset \mathbb{R}^N,$ let $\sigma_1=\sigma_1( \Omega)$ stand for the first
eigenvalue of the Dirichlet eigenvalue problem
\begin{equation}
\label{def:sigma1:Omega} \left( -\frac{1}{2}\Delta   + V(x)  \right)\phi _1:= \sigma_1\phi _1, \quad x\in\Omega , \quad \phi
_1(x)=0, \mbox{ on }\partial \Omega ,
\end{equation}
where the first eigenfunction $\phi_1=\phi_1( \Omega)$.

Let  $\omega $ the set where $g$ vanishes, defined by Eq. (\ref{set:supp:g}). 
Technically we will require that this coefficient is bounded away from zero asymptotically, i. e.
\begin{equation}
g(x)\geq g_0>0, \qquad \mbox{for}  \ |x|\gg 1, \label{hip:g}
\end{equation}
(note that in our numerical simulations we have observed the phenomenon described in the paper in less strict situations). 
In general, $\omega $ is not a connected set, let
$\sigma_1 ( \omega )$ be the minimum
\begin{equation}
\label{def:sigma1:omega} \sigma_1 ( \omega ):= \min \{\sigma_1( \omega_j): 1\leq j\leq J \},
\end{equation}
where $\omega_j: 1\leq j\leq J$ are defined in (H2).

\begin{theorem}\label{teoteo}
The problem (\ref{eq:rd})   has a positive solution $(\lambda ,u_{\lambda})$ whenever
\begin{equation}
\label{ineq:lambda1} \sigma_1 < \lambda < \sigma_1 ( \omega ),
\end{equation}
moreover
\begin{eqnarray}
\label{L:infty:norm} \|u_{\lambda}\|_{L^{\infty}(\mathbb{R}^N)} \to 0,  \mbox{ as }  \lambda \downarrow \sigma_1 , \\
\|u_{\lambda}\|_{L^{\infty}(\mathbb{R}^N)} \to \infty ,  \mbox{ as }  \lambda \uparrow  \sigma_1 ( \omega ).
\end{eqnarray}
\end{theorem}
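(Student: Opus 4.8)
The plan is to set up a sub- and super-solution (monotone iteration) scheme for the semilinear equation (\ref{eq:rd}) rewritten as $Lu = \lambda u - g(x)u^p$ with $L = -\tfrac12\Delta + V$, and then to extract the two limiting behaviors of $\|u_\lambda\|_\infty$ by careful comparison with eigenfunctions. Throughout, the existence of a positive solution will rely on Lemma~\ref{lemma:PM} (maximum principle on the possibly unbounded $\Omega = \mathbb{R}^N$) and on the compactness of $L^{-1}$ from Remark~\ref{compactness}, which turns the fixed-point problem $u = L^{-1}(\lambda u - g u^p)$ into one amenable to degree theory or monotone operator methods.

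First I would construct a sub-solution. Fix $\lambda$ with $\sigma_1 < \lambda < \sigma_1(\omega)$. Let $\phi_1 > 0$ be the principal eigenfunction of $L$ on $\mathbb{R}^N$ (normalized, say, in $L^\infty$), so $L\phi_1 = \sigma_1 \phi_1$. For small $\varepsilon > 0$, set $\underline{u} = \varepsilon \phi_1$. Then $L\underline{u} - \lambda \underline{u} + g\,\underline{u}^p = (\sigma_1 - \lambda)\varepsilon\phi_1 + \varepsilon^p g \phi_1^p$. Since $\sigma_1 - \lambda < 0$ and $g \geq 0$ with $g$ bounded, for $\varepsilon$ small enough the linear negative term dominates the $O(\varepsilon^p)$ term (using $p > 1$) wherever $\phi_1$ is not too small; near infinity $g \geq g_0$ and $\phi_1 \to 0$, so one must check the sign there too — this is where condition (\ref{hip:g}) is used, together with the decay rate of $\phi_1$. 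The upshot is $L\underline{u} \leq \lambda\underline{u} - g\underline{u}^p$, i.e. $\underline{u}$ is a sub-solution. For the super-solution, I would take $\overline{u} \equiv M$ a large constant: $L\overline{u} = VM \geq 0$, and we need $VM \leq \lambda M - gM^p$; this fails for large $V$, so instead one takes $\overline{u} = M\psi$ where $\psi$ solves $L\psi = 1$ with $\psi > 0$ (available by Lemma~\ref{lemma:lax:milgram} with $f \equiv 1$, and $\psi \in H$, $\psi$ bounded by Lemma~\ref{lemma:PM}); then tune $M$ so that $M \geq \lambda M \psi$ holds on the core and the nonlinear term only helps. With $\underline{u} \leq \overline{u}$ (arrange $\varepsilon$ small relative to $M$), monotone iteration between them produces a solution $u_\lambda$ with $\underline{u} \leq u_\lambda \leq \overline{u}$; positivity is immediate since $u_\lambda \geq \varepsilon\phi_1 > 0$, and regularity follows from elliptic estimates plus the compactness in Remark~\ref{compactness}.

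Next, the limit $\|u_\lambda\|_\infty \to 0$ as $\lambda \downarrow \sigma_1$: testing (\ref{eq:rd}) against $u_\lambda$ itself gives $\int |\nabla u_\lambda|^2/2 + V u_\lambda^2 = \lambda\int u_\lambda^2 - \int g u_\lambda^{p+2}$, whence by the Rayleigh characterization (Lemma~\ref{lemma:1}.iii) one gets $(\sigma_1 - \lambda)\int u_\lambda^2 \leq -\int g u_\lambda^{p+2} \leq 0$, actually $\int g u_\lambda^{p+2} \leq (\lambda - \sigma_1)\int u_\lambda^2$; combining with the energy identity one controls $\|u_\lambda\|_H$ by something vanishing as $\lambda \downarrow \sigma_1$ (one needs a little care to rule out the trivial branch dominating, but the sub-solution $\varepsilon\phi_1$ with $\varepsilon$ forced to be small as $\lambda \to \sigma_1$ pins $u_\lambda$ down), and then elliptic $L^\infty$ bounds (bootstrapping via $u_\lambda = L^{-1}(\lambda u_\lambda - g u_\lambda^p)$ and Sobolev embeddings) upgrade $H$-smallness to $L^\infty$-smallness. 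For the blow-up as $\lambda \uparrow \sigma_1(\omega)$: suppose for contradiction that $\|u_{\lambda_n}\|_\infty \leq C$ along $\lambda_n \uparrow \sigma_1(\omega)$. Then $g u_{\lambda_n}^p$ is bounded in $L^\infty$, uniform elliptic estimates give compactness, and a subsequence converges to a positive solution $u_*$ of $Lu_* = \sigma_1(\omega)u_* - g u_*^p$. Restricting to the component $\omega_j$ realizing the minimum in (\ref{def:sigma1:omega}) where $g \equiv 0$, $u_*$ solves $Lu_* = \sigma_1(\omega_j)u_*$ on $\omega_j$ with $u_* \geq 0$; testing against $\phi_1(\omega_j)$ and integrating by parts forces $u_* \equiv 0$ on $\omega_j$ (since $\sigma_1(\omega_j)$ is the principal Dirichlet eigenvalue and $u_* \geq 0$ cannot be a nontrivial second eigenfunction), contradicting $u_* \geq \varepsilon\phi_1 > 0$ — unless one is more careful, $u_*$ need not vanish on $\partial\omega_j$, so the contradiction really comes from: $u_*$ positive on $\omega_j$ solving $Lu_* = \sigma_1(\omega_j)u_*$ would make $\sigma_1(\omega_j)$ an eigenvalue with the ``wrong'' sign structure via the strict monotonicity in Lemma~\ref{lemma:2}, which is the lever.

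The main obstacle I anticipate is the behavior at spatial infinity in the unbounded setting: constructing genuine global sub/super-solutions requires matching the decay of $\phi_1$ (governed by the growth of $V$, hypothesis (\ref{hip:V})) against the coefficient $g$, and the comparison/maximum-principle arguments (Lemma~\ref{lemma:PM}) must be applied on $\mathbb{R}^N$ rather than a bounded domain, so one must ensure all test functions lie in $H$ and all integrals converge. The second delicate point is excluding, in the blow-up argument, the possibility that the limiting solution $u_*$ concentrates or escapes to infinity rather than converging; ruling this out uses the confining potential $V \to \infty$ together with the compact embedding $H \hookrightarrow L^2$, so the $L^2$ mass cannot leak. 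Everything else — the monotone iteration, elliptic bootstrap, and the Rayleigh-quotient estimates — is routine once these two analytic points are secured.
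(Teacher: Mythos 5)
Your route (sub/super-solutions plus compactness) is genuinely different from the paper's, which argues via Crandall--Rabinowitz bifurcation from $(\sigma_1,0)$, global continuation, and the eigenvalue comparison $\lambda=\sigma_1(\mathbb{R}^N,V+gu_\lambda^{p-1})<\sigma_1(\omega,V+gu_\lambda^{p-1})=\sigma_1(\omega,V)$, which simultaneously gives the nonexistence for $\lambda\geq\sigma_1(\omega)$ and the blow-up of the branch there. The fatal problem in your version is the super-solution. With $\overline u=M\psi$, $L\psi=1$, the super-solution inequality reads $1\geq\lambda\psi-gM^{p-1}\psi^{p}$; on the set $\omega$ where $g\equiv 0$ this becomes $\lambda\psi\leq 1$ on $\omega$, a condition that is \emph{independent of $M$}, so ``tuning $M$'' cannot help, and there is no reason it holds: e.g.\ if $\omega$ is a small ball near the minimum of $V$, then $\sigma_1(\omega)$ is very large while $\psi=L^{-1}1$ is an $O(1)$ function there, so for $\lambda$ large but still below $\sigma_1(\omega)$ one has $\lambda\psi>1$ on $\omega$ and $M\psi$ is not a super-solution. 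The tell-tale symptom is that your existence argument never uses the hypothesis $\lambda<\sigma_1(\omega)$; since positive solutions provably do not exist for $\lambda\geq\sigma_1(\omega)$ (precisely by the displayed eigenvalue comparison above, via Lemma~\ref{lemma:2}), any construction that ignores this restriction must break down. The standard repair, in the spirit of the cited work of Garc\'{\i}a-Meli\'an et al., is to build $\overline u$ from the principal Dirichlet eigenfunction of an enlarged neighborhood $\omega^{\delta}\supset\omega$ chosen (by domain monotonicity/continuity of $\sigma_1(\cdot)$) so that $\lambda<\sigma_1(\omega^{\delta})$, glued to a large profile on the region where $g$ is bounded below as in \eqref{hip:g}; there the sign $\sigma_1(\omega^{\delta})-\lambda>0$ is exactly what makes the inequality hold where the nonlinearity gives no help.

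Two further, smaller gaps. For $\lambda\downarrow\sigma_1$, your energy estimate only yields $\int g\,u_\lambda^{p+2}\leq(\lambda-\sigma_1)\int u_\lambda^{2}$, which gives no control of $u_\lambda$ on $\omega$ (where $g=0$) and in any case bounds a ratio, not a norm; smallness of $\|u_\lambda\|_{L^\infty}$ really comes from identifying $u_\lambda$ with the local supercritical branch at $(\sigma_1,0)$ (Crandall--Rabinowitz) together with a uniqueness/a priori-bound argument, which is what the paper invokes. For $\lambda\uparrow\sigma_1(\omega)$, your contradiction argument is salvageable, but not by testing against $\phi_1(\omega_j)$ (the boundary term on $\partial\omega_j$ you worry about is a genuine obstruction, since $u_*$ need not vanish there); the clean lever is the one above: any nontrivial nonnegative limit $u_*$ at $\lambda=\sigma_1(\omega)$ would satisfy $\sigma_1(\omega)=\sigma_1(\mathbb{R}^N,V+gu_*^{p-1})<\sigma_1(\omega,V)$, a contradiction, provided you also rule out $u_*\equiv 0$ (e.g.\ by the sub-solution lower bound $u_\lambda\geq\varepsilon\phi_1$ with $\varepsilon$ uniform for $\lambda$ away from $\sigma_1$), and the confinement \eqref{hip:V} plus the compact embedding of Lemma~\ref{lemma:1} to prevent loss of mass at infinity, as you indicate.
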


\begin{proof}
The Crandall--Rabinowitz's bifurcation theorem, \cite{C-R-73} implies that $(\sigma_1 ,0)$ is a bifurcation point in
$\mathbb{R}\times H.$
Moreover, the bifurcation is supercritical, i.e. on the right of the eigenvalue.

And, the monotonicity with respect to the parameter implies the uniqueness.

\begin{equation}\label{eq:lambda}
\lambda = \sigma_1 (\mathbb{R}^N,  V+g u^{p-1} )< \sigma_1(\omega ,  V+g u^{p-1} )=\sigma_1(\omega ,  V ).
\end{equation}

\end{proof}

 \section{Conclusions and discussion} 
 
 We have studied the ground state of nonlinear Schr\"odinger equations with confining potentials and repulsive interactions vanishing on nonempty sets $\omega$ and shown that  there is a nonlinear localization phenomenon of the solution on this set. We have presented analytical formulae and numerical simulations showing it and provided a theorem which rigorously explains some of the observed features. 
 
 The results have interesting implications to the physics of  Bose-Einstein condensates with repulsive spatially varying interactions because in that situation the atom density will localize dramatically near the regions where the interactions are close to zero when the nonlinear interactions are relevant enough. This behavior does not depend on the spatial dimensionality and can be used to design very effectively one, two or three-dimensional spatial distributions with large values of the atom density by acting on the control field, e.g.
 using micromagnets to change locally the magnetic field or localized laser beams. 
 
 The behavior described in this paper cannot be achieved with ordinary potentials since due to tunneling of the wavefunction in linear potentials it is not posinble to achieve such a high degree of localization. However, the nonlinear interactions here are essential in limiting the tunneling to the regions with larger values of the scattering length which is independent on the number of particles. 
 
 Our results could be useful in different applications of Bose-Einstein condensates such as for atom lithography, atom beam guiding  or other applications where a precise control of the positioning of large values of the density of an atomic cloud is required.

These phenomena  may appear in multicomponent condensates that offer a much wider range of possibilities for their ground states depending on the type of components and their relative interactions \cite{Last}. It would be interesting to study the situation in which the variations of the scattering length make the components to be miscible or inmiscible in different spatial regions and  what is the geometry of the resulting domains.

Finally, it is interesting that the phenomena reported here could also be observed in other systems ruled by nonlinear Schr\"odinger equations similar to Eq. (\ref{gs}), for instance in nonlinear optical systems. Although ordinary optical materials have small nonlinearities were these phenomena could not be easily observed except for unrealistically large laser intensities, there are media with a giant nonlinear response \cite{EIT1,EIT2} as it happens when a probe laser beam propagates in a medium with  transparency induced electromagnetically by a second coupling field. Therefore, managing  the parameters it may be posible to find optical version of the phenomena presented here.

\textbf{Acknowledgements.} This work has been partially supported by grants FIS2006-04190 and MTM2006-08262 (Ministerio de Educaci\'on y Ciencia, Spain) and PCI-08-0093 (Consejer\'{\i}a de Educaci\'on y Ciencia, Junta de Comunidades de Castilla-La Mancha, Spain).

\end{document}